\newcommand{\myparagraph}[1]{\textbf{\emph{#1}}.}
\newenvironment{lenumerate}[2][]
{\begin{enumerate}[label=(#2\arabic*),leftmargin=0.2in,itemindent=0.15in,#1]}
{\end{enumerate}}
\setlist*[enumerate,1]{label={\itshape\arabic*)}}
\newcommand{\paragraphswithstop}{%
\let\copyparagraph\paragraph%
\renewcommand\paragraph[1]{\copyparagraph{##1.}}%
}
\newsavebox{\boxifnotempty}
\newcommand{\displayifnotempty}[3]{\sbox\boxifnotempty{#2}\setbox0=\hbox{\usebox{\boxifnotempty}\unskip}%
\ifdim\wd0=0pt
\else
 #1\usebox{\boxifnotempty}#3%
\fi%
}
\newcommand{\ifempty}[2]{\setbox0=\hbox{#1\unskip}%
\ifdim\wd0=0pt%
 #2%
\fi%
}
\newcommand{\ifnotempty}[2]{\setbox0=\hbox{#1\unskip}%
\ifdim\wd0>0pt%
 #2%
\fi%
}
\newcommand*\newstoreddef[1]{
  \BeforeClosingMainAux{%
    \immediate\write\@auxout{%
      \string\restoredef{#1}{\csname #1\endcsname}%
    }%
  }%
}
\newcommand*{\restoredef}[2]{
  \expandafter\gdef\csname stored@#1\endcsname{#2}%
}
\newcommand*{\storeddef}[1]{
  \@ifundefined{stored@#1}{0}{\csname stored@#1\endcsname}%
}
\newcommand{\real}[1]{\mathbb{R}^{#1}{}}
\newcommand{\sphere}[1]{{\mathbb{S}^{#1}}{}}
\newcommand{\defeq}{\doteq}
\DeclarePairedDelimiter{\abs}{\lvert}{\rvert}
\DeclarePairedDelimiter{\norm}{\lVert}{\rVert}
\newcommand{\vct}[1]{\mathbf{#1}}
\DeclareMathOperator*{\argmin}{\arg\!\min}
\DeclareMathOperator{\sign}{sign}
\providecommand{\ve}{\vct{e}}
\providecommand{\vp}{\vct{p}}
\providecommand{\vq}{\vct{q}}
\providecommand{\vs}{\vct{s}}
\providecommand{\vu}{\vct{u}}
\providecommand{\vv}{\vct{v}}
\providecommand{\vx}{\vct{x}}
\providecommand{\vy}{\vct{y}}
\providecommand{\mH}{\vct{H}}
\providecommand{\mI}{\vct{I}}
\providecommand{\mP}{\vct{P}}
\providecommand{\cD}{\mathcal{D}}
\providecommand{\cO}{\mathcal{O}}
\providecommand{\cP}{\mathcal{P}}
\providecommand{\cQ}{\mathcal{Q}}
\providecommand{\cS}{\mathcal{S}}
\providecommand{\cU}{\mathcal{U}}
\providecommand{\trp}{\mathsf{T}}
\pgfplotsset{plot coordinates/math parser=false} 
\newlength\figureheight 
\newlength\figurewidth 
\newcommand{\phifov}{\phi_{\textrm{FOV}}}
\title{\LARGE \bf Bearing-Only Navigation with Field of View Constraints}
\author{Arman Karimian and Roberto Tron
\thanks{The authors are with the Department of Mechanical Engineering of Boston University, 
Boston, MA. E-mail: {\tt\small \{armandok,tron\}@bu.edu}.}%
}
\begin{document}

\maketitle
\thispagestyle{empty}
\pagestyle{empty}

\begin{abstract} 



  This paper addresses the problem of navigation using only relative direction measurements (i.e., relative distances are unknown) under field of view constraints. We present a novel navigation vector field for the bearing-based visual homing problem with respect to static visual landmarks in 2-D and 3-D environments. Our method employs two control fields that are tangent and normal to ellipsoids having landmarks as their foci. The tangent field steers the robot to a set of points where the average of observed bearings is parallel to the average of the desired bearings, and the normal field uses the angle between a pair of bearings as a proxy to adjust the robot's distance from landmarks and to satisfy the field of view constraints. Both fields are blended together to construct an almost globally stable control law. Our method is easy to implement, as it requires only comparisons between average bearings, and between angles of pairs of vectors. We provide simulations that demonstrate the performance of our approach for a double integrator system and unicycles.

\end{abstract}

\section{INTRODUCTION}
Control in robotics systems is an ongoing research area in many respects. One interesting problem is the one of bearing-only navigation, motivated by the use of vision sensors in robotics applications. Such sensors, such as monocular cameras, can provide accurate bearing (relative direction) measurements, although the corresponding distances are typically difficult to obtain with comparable precision. In addition, vision sensors typically have a limited field of view (FOV). These two limitations increases the complexity of navigation considerably.

The problem addressed in this paper is visual homing, the task of reaching a desired location using the bearing measurements of fixed landmarks in the surrounding environment \cite{hong1992image}. A practical application of this problem is when a robot takes a picture of the environment from a home location, moves to a new location, and then needs to return to the home location using bearing vectors extracted from the current and home images. 
Existing methods can be divided into gradient methods~\cite{tron2014optimization, lambrinos1998landmark, cowan2002visual}, image-based visual servoing~\cite{corke2003mobile, papanikolopoulos1993adaptive, liu2013visual}, and ad-hoc methods~\cite{argyros2001robot, lim2009robust, liu2010bearing, loizou2007biologically}. While gradient methods can achieve global stability \cite{tron2014optimization} without using range estimation (actual or estimated), FOV constraints are not generally considered in these approaches, and it is common to assume omnidirectional vision sensors. Notable exceptions of~\cite{lopez2009homography}, where a homography-based approach is given for keeping a single target in the field of view of a unicycle with an onboard IMU and with a camera attached to the body, and~\cite{cowan2002visual} where a navigation function based approach was used but required planar targets with known geometry.

FOV constraints restrict the feasible locations for the moving sensor carried by the robot, thus effectively creating obstacles in the configuration space of the robot. The goal then becomes to complete the visual homing task while avoiding these obstacles; however, the location of such obstacles is not directly available to the robot, due to the missing distance measurements. Many different methods have been suggested for control-based obstacle avoidance, some of which are potential methods~\cite{khatib1986real, hernandez2011convergence}, navigation functions~\cite{rimon1992exact}, and harmonic functions~\cite{connolly1990path,feder1997real, kim1992real}. Potential methods are prone to local minima away from the goal point; navigation functions are free of local minima but are sensitive to the value of a tuning parameter which is not known a priori, and harmonic functions are usually computationally demanding and require the location of the obstacles. An alternative approach for obstacle avoidance is to directly design a \emph{navigation vector field} which encodes the objectives (desired home location and FOV obstacle avoidance), and is employed directly or indirectly in the control synthesis step. This idea has been previously used for obstacle avoidance in unicycles, but with full information on the relative position between robot and the obstacles~\cite{panagou2014motion, panagou2016distributed}.


\noindent\myparagraph{Our approach}
We introduce two orthogonal flows that respectively adjust the direction of the average of the bearings, and the angle between a pair of bearings. We then combine these two flows into a navigation flow, which in turn is used to design controllers for solving the visual homing problem in the presence of FOV constraints laws with damped double integrators and unicycles.
Our approach is applicable to both 2-D and 3-D environments, and presents almost-global convergence (the integral curve of every starting point, except for a set of Lebesgue measure zero, converges to the home location). Our approach does not rely on all the bearings directly, but it rather uses the normalized average of the bearings, and a single angle between two non-collinear bearings.
We assume that the camera on the robot can rotate independently from the body and direction of motion of the robot, and we model the field of view as a cone with angle less than $\pi$. Additionally, we assume that robot's local reference frame is axis-aligned with a fixed world frame (e.g., through a global compass direction).

\section{NOTATION AND PRELIMINARIES}
We denote the dimension of the workspace by $d\in\{2,3\}$. 
The \emph{bearing measurement} vector between two distinct points $\vx_i, \vx_j\in\real{d}$ is given by the unit vector:
\begin{equation}
\vct{u}(\vx_i,\vx_j) \defeq \dfrac{\vx_j-\vx_i}{\norm{\vx_j-\vx_i}}\,.
\label{eq:bearingdef}
\end{equation}
We denote the cardinality of a discrete set $\cP$ as $\abs{\cP}$, and the boundary of a continuous set $\cQ$ as $\partial\cQ$. 
The identity matrix is denoted by $\mI_d\in\real{d\times d}$, the $d$-dimensional unit sphere by $\mathbb S^d$, and the Minkowski sum by $\oplus$. We use $\measuredangle(\vu_1,\vu_2)$ to denote the (non-oriented) angle in radians between two vectors $\vu_1,\vu_2$.
A \emph{projection matrix} $\mP(\vct v)\in\real{d\times d}$ for a vector $\vct v\in\real{d}$ is defined by:
\begin{equation}\label{eq:projection}
\mP(\vct v)\defeq \mI_d - \frac{\vct v \vct v^\trp}{\|\vct v\|^2};
\end{equation}
$\mP(\vct v)$ is symmetric, positive semidefinite, with a zero eigenvalue corresponding to $\vct v$, while other eigenvalues are one.
\begin{lemma}\label{lemma:unitvec}
Given a unit vector in the from $\vy=\frac{\bm g(\vx)}{\norm{\bm g(\vx)}}$, the Jacobian of $\vy$ with respect to $\vx$ is given by: $\norm{\bm g(\vx)}^{-1}\mP(\bm g(\vx))\frac{\partial\bm g}{\partial \vx}$.
\end{lemma}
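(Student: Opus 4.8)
The plan is to compute the Jacobian directly by writing $\vy$ as the product of the vector factor $\bm g(\vx)$ and the scalar factor $\norm{\bm g(\vx)}^{-1}$, and applying the product and chain rules. First I would differentiate this product, treating the two factors separately, so that $\pder[\vy]{\vx} = \norm{\bm g}^{-1}\pder[\bm g]{\vx} + \bm g\,\pder[\norm{\bm g}^{-1}]{\vx}$, where the second summand is the outer product of the column vector $\bm g$ with the $1\times d$ row-vector derivative of the scalar $\norm{\bm g}^{-1}$.

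The key computation is the derivative of the norm. Writing $\norm{\bm g} = (\bm g^\trp \bm g)^{1/2}$ and using the chain rule gives $\pder[\norm{\bm g}]{\vx} = \norm{\bm g}^{-1}\bm g^\trp \pder[\bm g]{\vx}$, a row vector, and hence $\pder[\norm{\bm g}^{-1}]{\vx} = -\norm{\bm g}^{-3}\bm g^\trp\pder[\bm g]{\vx}$. Substituting this into the product-rule expression and factoring $\norm{\bm g}^{-1}$ together with the common right factor $\pder[\bm g]{\vx}$, the two terms combine into $\norm{\bm g}^{-1}\bigl(\mI_d - \norm{\bm g}^{-2}\bm g\bm g^\trp\bigr)\pder[\bm g]{\vx}$.

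Finally I would recognize the parenthesized matrix as exactly the projection matrix $\mP(\bm g)$ from its definition \eqref{eq:projection}, which yields the claimed expression $\norm{\bm g(\vx)}^{-1}\mP(\bm g(\vx))\pder[\bm g]{\vx}$. Since every step is a direct application of standard differentiation rules, there is no genuine obstacle; the only points requiring care are the bookkeeping of row-versus-column shapes in the outer-product term and the correct power of $\norm{\bm g}$ produced when differentiating the inverse norm. It is worth noting that the computation is valid wherever $\bm g(\vx)\neq 0$, so that both $\vy$ and $\norm{\bm g}^{-1}$ are well defined and differentiable.
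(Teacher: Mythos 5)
Your proof is correct and follows essentially the same route as the paper's: product rule on $\bm g(\vx)\cdot\norm{\bm g(\vx)}^{-1}$ plus the chain rule for the inverse norm, then factoring out $\norm{\bm g(\vx)}^{-1}$ and $\frac{\partial\bm g}{\partial\vx}$ to recognize the projection matrix $\mP(\bm g(\vx))$. Your version is in fact slightly more careful than the paper's, whose displayed intermediate expression carries a $+$ on the second term where a $-$ is needed (a sign typo that your derivation of $\pder[{\norm{\bm g}^{-1}}]{\vx}=-\norm{\bm g}^{-3}\bm g^\trp\pder[\bm g]{\vx}$ gets right), and you also correctly note the domain restriction $\bm g(\vx)\neq 0$.
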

\begin{proof}
Since $\frac{\partial\vy}{\partial\vx}=\norm{\bm g(\vx)}^{-1}\frac{\partial\bm g}{\partial\vx}+\norm{\bm g(\vx)}^{-3}\bm g\bm g^\trp\frac{\partial\bm g}{\partial\vx}$, the proof is complete by collecting $\norm{\bm g(\vx)}^{-1}$ and $\frac{\partial\bm g}{\partial\vx}$.
\end{proof}

Given $k$ fixed and distinct points $\cP=\{\vp_i\}_{i=1}^k$ in $\real{d}$, we define the \emph{distance function} $\vartheta(\vx)\defeq\sum_{i=1}^k\norm{\vx-\vp_i}$ to be the sum of distances from $\vx$ to all points in $\cP$. We have the following facts regarding the function $\vartheta$.
\begin{definition}\label{def:kellipsoid}
A \emph{$k$-ellipsoid} is the set of points over which $\vartheta(\cdot)$ is equal to a constant $r$, and points in $\cP$ are called \emph{foci}. it can be also defined as the boundary of the set-valued map: 
\begin{equation}
\cQ(r)=\{\vx\in\real{d}:\vartheta(\vx)\leq r\}\,.
\label{eq:kellipsoid}
\end{equation}
\end{definition}
\begin{lemma}\label{lemma:hessian}
 Hessian of $\vartheta$ is positive semidefinite, and is positive definite if all points are not collinear.
\end{lemma}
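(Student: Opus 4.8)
The plan is to write $\hess\vartheta$ as an explicit sum of rescaled projection matrices and then read off both claims directly from the spectral properties of $\mP(\cdot)$ recorded beneath \eqref{eq:projection}. First I would observe that $\vartheta$ is twice differentiable at every $\vx\notin\cP$, and compute the gradient of each summand: $\nabla\norm{\vx-\vp_i}=\frac{\vx-\vp_i}{\norm{\vx-\vp_i}}$, which is a unit vector of exactly the form treated in Lemma~\ref{lemma:unitvec}, with $\bm g(\vx)=\vx-\vp_i$ and $\frac{\partial\bm g}{\partial\vx}=\mI_d$. Applying that lemma gives the Hessian of the $i$-th term as $\norm{\vx-\vp_i}^{-1}\mP(\vx-\vp_i)$, so that
\[
\hess\vartheta(\vx)=\sum_{i=1}^k\frac{1}{\norm{\vx-\vp_i}}\,\mP(\vx-\vp_i).
\]

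For positive semidefiniteness, each coefficient $\norm{\vx-\vp_i}^{-1}$ is strictly positive and each $\mP(\vx-\vp_i)$ is symmetric positive semidefinite, so the sum is a nonnegative combination of positive semidefinite matrices and is therefore positive semidefinite.

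For the definiteness claim I would argue by contraposition. Suppose $\hess\vartheta(\vx)$ is not positive definite, so there is a nonzero $\vw$ with $\vw^\trp\hess\vartheta(\vx)\vw=0$. Since every term in the resulting quadratic form is nonnegative, each must vanish individually, i.e. $\vw^\trp\mP(\vx-\vp_i)\vw=0$ for all $i$. Because $\mP(\vx-\vp_i)$ is positive semidefinite, this forces $\mP(\vx-\vp_i)\vw=0$; and since $\vecnull\mP(\vct v)=\vecspan\{\vct v\}$ by the stated eigenstructure of $\mP$, it follows that $\vw$ is parallel to $\vx-\vp_i$ for every $i$. Hence each focus $\vp_i$ lies on the single line through $\vx$ with direction $\vw$, so the foci are collinear. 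Contrapositively, if the foci are not collinear then no such $\vw$ exists and the Hessian is positive definite.

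The first-step computation and the semidefinite bound are routine; the only facts needing care are the equivalence $\vw^\trp\mP\vw=0\iff\mP\vw=0$ for positive semidefinite $\mP$ and the identification $\vecnull\mP(\vct v)=\vecspan\{\vct v\}$, both immediate from the eigenvalues of $\mP$ given after \eqref{eq:projection}. I expect the main point to watch is bookkeeping about the domain — the definiteness statement is understood at points $\vx\notin\cP$ where $\vartheta$ is smooth — together with the recognition that collinearity of the foci is precisely the degenerate configuration admitting a common null direction $\vw$ (in particular the claim is vacuous for $k\le 2$, where the foci are always collinear).
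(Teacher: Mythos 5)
Your proof is correct and follows essentially the same route as the paper: write $\hess\vartheta$ as the sum $\sum_i \norm{\vx-\vp_i}^{-1}\mP(\vu(\vx,\vp_i))$ and observe that the quadratic form can only vanish on a nonzero $\vw$ if $\vw$ lies in the null space of every projector, i.e.\ all foci are collinear with $\vx$ on their common line. Your write-up is in fact slightly more careful than the paper's (the explicit step $\vw^\trp\mP\vw=0\Rightarrow\mP\vw=0$ and the remark that degeneracy also requires $\vx$ to lie on the foci's line), but there is no substantive difference in approach.
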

\begin{proof}
The gradient of $\vartheta(\cdot)$ is given by $\frac{\partial\vartheta}{\partial\vx}=-\sum_{i=1}^{k}\vu(\vx,\vp_i)$ and its Hessian is given by $\mH(\vx)\defeq\frac{\partial^2\vartheta}{\partial\vx^2}=\sum_{i=1}^{k}\frac{\mP(\vu(\vx,\vp_i))}{\norm{\vx-\vp_i}}$. Since $\mH$ is the sum of positive semidefinite matrices, $\mH$ is also positive semidefinite. Moreover, we have $\vv^\trp\mH\vv=\vct 0$, and hence $\mH$ is positive semi-definite, if and only if terms in the sum has the same eigenvector with zero eigenvalue, i.e., all the points in $\cP$ are collinear, and $\vx$ lies on the same line. In all other cases, $\mH$ is positive definite.
\end{proof}

A point $\vp\in\real{d}$ is said to be a geometric median of the set $\cP$ if $\vp\in\argmin_{\vx}\vartheta(\vx)$. We have $\frac{\partial\vartheta}{\partial\vx}(\vp)=\vct 0$ if $\vp\notin\cP$.
\begin{lemma}[\!\!\cite{vardi2000multivariate}]
The geometric median of set $\cP$ is unique, unless all points in $\cP$ are collinear and $k$ is even.
\label{lemma:v}
\end{lemma}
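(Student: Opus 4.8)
The plan is to exploit the convexity of $\vartheta$ and to pin down exactly when strict convexity fails. First I would observe that $\vartheta(\vx)=\sum_{i=1}^k\norm{\vx-\vp_i}$ is a sum of norms, hence convex, and coercive (it grows like $k\norm{\vx}$ as $\norm{\vx}\to\infty$), so the minimizer set $\cM\defeq\argmin_{\vx}\vartheta(\vx)$ is nonempty, compact, and convex. Uniqueness is then equivalent to $\cM$ being a singleton, and non-uniqueness means $\cM$ contains a nondegenerate segment, since a convex set with more than one point contains the segment joining them.

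The core tool I would establish is a line-by-line strict-convexity dichotomy. Along any line $\vx_0+t\vd$, the term $\norm{\vx_0+t\vd-\vp_i}$ equals $\sqrt{q_i(t)}$ for a quadratic $q_i(t)$; this is affine in $t$ precisely when the line passes through $\vp_i$ (so $q_i$ is a perfect square) and is strictly convex in $t$ otherwise. Summing, $\vartheta$ is strictly convex along the line unless \emph{every} focus lies on that line. This is consistent with Lemma~\ref{lemma:hessian}, which gives $\mH(\vx)\succ 0$ off $\cP$ exactly when the foci are not collinear; the square-root argument has the advantage of also covering the non-smooth points $\vx\in\cP$.

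For the non-collinear case, no line can contain all foci, so $\vartheta$ is strictly convex along every line, hence strictly convex, and $\cM$ is a singleton — the median is unique. For the collinear case, let $L_0$ be the common line through the foci. I would first confine $\cM$ to $L_0$: for any $\vx\notin L_0$, replacing $\vx$ by its orthogonal projection onto $L_0$ strictly decreases each $\norm{\vx-\vp_i}$ by Pythagoras (using $\vp_i\in L_0$), hence strictly decreases $\vartheta$, so every minimizer lies on $L_0$. Parametrizing $L_0$ by a scalar and writing the focus coordinates as $t_1<\dots<t_k$ (distinct because the points are distinct), the problem reduces to minimizing $\sum_i\abs{t-t_i}$, whose minimizer set is the classical median: a single middle value when $k$ is odd, and the full closed interval between the two middle coordinates when $k$ is even. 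Since the $t_i$ are distinct, that interval is nondegenerate, yielding non-uniqueness exactly in the even collinear case.

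The main obstacle I anticipate is making the dichotomy airtight at the boundary between the two regimes, i.e.\ arguing cleanly that a flat segment of minimizers can only occur along a line containing all foci (so only along $L_0$), together with the non-smoothness of $\vartheta$ at the foci. The per-line $\sqrt{q_i(t)}$ computation is the cleanest way to handle both difficulties at once: it gives strict convexity whenever at least one focus lies off the line, and it reduces the remaining case to the transparent one-dimensional median problem on $L_0$.
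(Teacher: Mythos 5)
Your proof is correct, but note that the paper does not prove this lemma at all: it is imported as a citation to Vardi and Zhang \cite{vardi2000multivariate}, so there is no in-paper argument to compare against. What you supply is a self-contained elementary proof via the standard convexity route: coercivity plus convexity gives a nonempty compact convex minimizer set; the per-line dichotomy (each restriction $\sqrt{q_i(t)}$ is strictly convex unless the line passes through $\vp_i$) shows that a nondegenerate segment of minimizers forces all foci onto that line; the Pythagorean projection step confines minimizers to the common line $L_0$; and the one-dimensional median settles the collinear case, with non-uniqueness exactly for even $k$ since the two middle coordinates are distinct. This is a genuine addition relative to the paper, and your square-root argument is the right tool: the paper's own Lemma~\ref{lemma:hessian} only certifies strict convexity where $\vartheta$ is twice differentiable, i.e.\ off $\cP$, whereas the minimizer in the odd collinear case is itself a focus, so a purely Hessian-based uniqueness argument would have a gap that your line-restriction computation closes. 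One small wording slip: when the line passes through $\vp_i$ the restricted term is $\norm{\vd}\abs{t-t_i}$, which is piecewise affine rather than affine; this does not affect the argument, since all you use is that such a term fails to be strictly convex while every term coming from a focus off the line is strictly convex, so the sum is strictly convex along any line missing at least one focus.
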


\section{BEARING-ONLY NAVIGATION}
\begin{figure*}
\centering
\hfill
\begin{minipage}[c]{0.4\textwidth}
\subfloat[Obstacle sets $\cO_{12}$ for $\phifov\in\{{\color{TealBlue}\frac{3\pi}{8}},{\color{NavyBlue}\frac{7\pi}{8}}\}$ and the desired set {\color{ForestGreen}$\cD_{12}^*$}. These sets are obtained by joining two circular segments.\label{fig:demon}]{\includegraphics{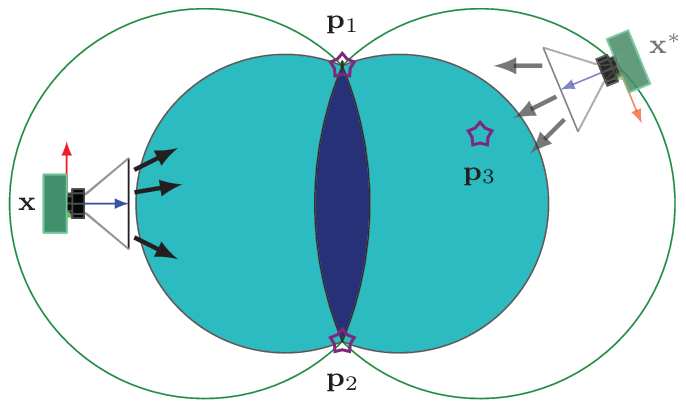}}
\end{minipage}
\hfill
\begin{minipage}[c]{0.5\textwidth}
\subfloat[$\bm f_t^\circ(\vx), k=2$]{\includegraphics{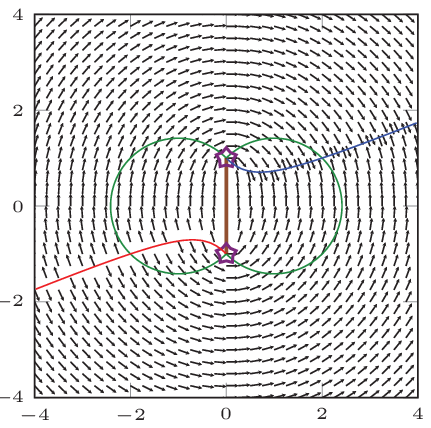}\label{fig:elliptical_flow2}} 
\subfloat[$\bm f_n^\circ(\vx), k=2$]{\!\!\!\!\!\!\!\!\includegraphics{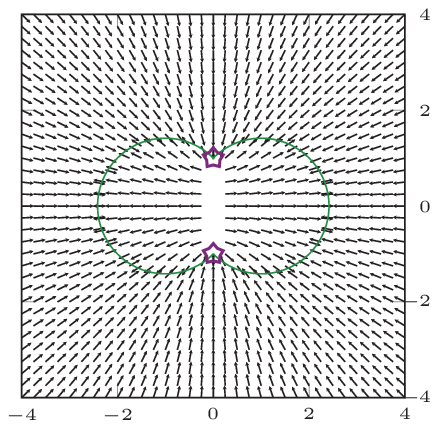}\label{fig:normal_flow2}} \\
\subfloat[$\bm f_t^\circ(\vx), k=3$]{\includegraphics{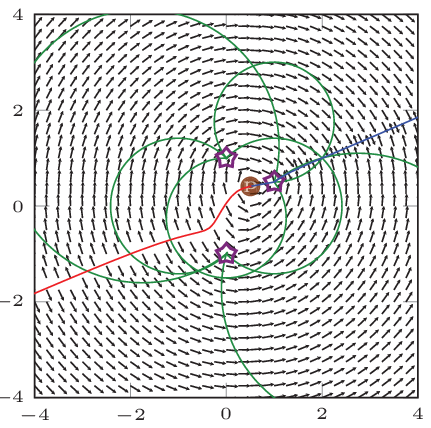}\label{fig:elliptical_flow3}}
\subfloat[$\bm f_n^\circ(\vx), k=3$]{\!\!\!\!\!\!\!\!\includegraphics{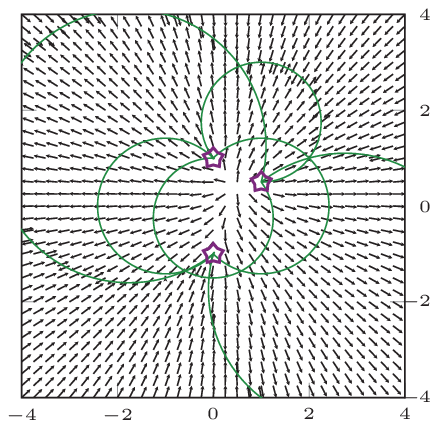}\label{fig:normal_flow3}}
\end{minipage}
\hfill
\caption{\protect\subref{fig:demon} shows the current and goal positions of the robot with respect to three landmarks\! (\protect\tikz\protect\node[star,star points=5,draw, thick, violet, fill=none, scale=0.55](p) at (0,0) {};), as well as $\cO_{12}$ and $\cD_{12}^*$ sets. In \protect\subref{fig:elliptical_flow2}, the ellipsoidal flow from $\bm f_t$ is observed, and the isonormal curves {\color{blue}$\bm\xi_{\vv^*}$} and {\color{red} $\bm\xi_{-\vv^*}$} for the first two landmarks. The goal position $\vx^*$ is at the intersection of the {\color{ForestGreen}$\cD_{12}^*$} set and the {\color{blue}$\bm\xi_{\vv^*}$} curve. Since there are an even number of collinear landmarks, geometric median is not unique and is shown by the line segment connecting $\vp_1$ to $\vp_2$. In \protect\subref{fig:normal_flow2}, the normal flow from $\bm f_n$ is shown. In \protect\subref{fig:elliptical_flow3}-\protect\subref{fig:normal_flow3}, the two flows are given for three landmarks, with $\vx^*$ at the intersection of {\color{ForestGreen}$\cD_{12}^*$}, {\color{ForestGreen}$\cD_{23}^*$}, {\color{ForestGreen}$\cD_{13}^*$} and {\color{blue}$\bm\xi_{\vv^*}$}. In this case, since the three landmarks are not collinear, geometric median is unique and is shown by \protect\tikz\protect\node[circle, draw=Sepia, fill=Sepia, inner sep=0pt, text=white, minimum size=2pt, scale = 1.7] at (0,0) { \tiny$\vct p$};.}
\label{fig:demonstration}
\end{figure*}
We address the visual homing problem with restricted FOV:
\begin{problem}
Given a set of $k$ landmarks $\cP=\{\vp_i\}_{i=1}^k$ in $\real{d}$ with $k\geq 2$, find a controller that steers the robot to the desired position $\vx^*\in\real{d}$ specified by a set of desired relative bearings $\{\vu^*_i\}_{i=1}^k$ (where $\vu^*_i\defeq\vu(\vx^*,\vp_i)$), using only relative bearing measurements of the landmarks with respect to robot's current location $\vu_i\defeq \vu(\vx,\vp_i)$, while satisfying the following pairwise conditions at all times: 
\begin{equation}
\measuredangle(\vu_{i},\vu_{j})< \phifov, \quad \forall i,j\in\{1,\dots,k\}\,.
\label{eq:FOV}
\end{equation}
\label{problem1}
\end{problem}

The constraints given in \eqref{eq:FOV} ensure that all the landmarks remain in the visual field of the robot's camera observing them, modeled as a cone with angle $\phifov<\pi$. This can be summarized as $\vx(t)\notin\bigcup_{i,j}\cO_{ij}$ for all $t$, where $\cO_{ij}$ is a \emph{field of view obstacle} set defined as:
\begin{equation}
\cO_{ij} = \{\vx\in\real{d} : \measuredangle(\vct u_{i},\vct u_{j})\geq \phifov\}.
\end{equation}
On the boundary of $\cO_{ij}$, the robot's view angle of the landmarks $\vp_i$ and $\vp_j$ is equal to $\phifov$. While these regions are to be avoided, there are desired regions with shape similar to the boundary of $\cO_{ij}$ where the view angle of landmarks is equal to the desired view angle given by desired bearings. We define such \emph{desired field of view} as:
\begin{equation}
\cD^*_{ij} = \{\vx\in\real{d} : \measuredangle(\vct u_{i},\vct u_{j})= \measuredangle(\vu^*_{i},\vu^*_{j})\}.
\end{equation}
It follows that $\vx^*\in\bigcap_{i,j}\cD^*_{ij}$. These sets are depicted in Fig.~\ref{fig:demon} for $d=2$. For $d=3$, $\cO_{ij}$ and $\cD^*_{ij}$ can be visualized by revolving their 2-D version about the line intersecting with $\vp_i$ and $\vp_j$.



\subsection{Tangential and normal vector fields}
Here we define two orthogonal vector fields that are then combined into a single navigation field. Let $\vv,\vv^*\in\real{d}$ be the normalized sum of the current and desired bearings:
\begin{equation}
\begin{aligned}
\vv(\vx)&=
\frac{\sum_{i=1}^{k}\vu_i}{\norm{\sum_{i=1}^{k}\vu_i}},\\ 
\vv^*&=\frac{\sum_{i=1}^{k}\vu_i^*}{\norm{\sum_{i=1}^{k}\vu^*_i}}=\vv(\vx^*).
\end{aligned}
\end{equation}
Notice that $\vv$ is a vector field and $\vv^*$ is a fixed value.
\begin{remark}
Vector $\vv$ is defined everywhere, except at the landmarks (i.e. $\vx=\vp_i$) since the corresponding bearing vector $\vu_i$ is undefined, and where the gradient of $\vartheta$ is zero ($-\sum_{i=1}^{k}\vu_i=\vct 0$), which happens at the geometric median of the landmarks. As stated in Lemma \ref{lemma:v}, the geometric median is unique unless we have an even number of collinear foci (e.g. an ellipse), in which case $\vv$ is not defined on the line segment that contains the two middle foci \cite{kemperman1987median}.
\end{remark}

Let $\delta_{ij}$ be the difference between the cosine of the current and the desired bearings of the landmarks $\vp_i$ and $\vp_j$:
\begin{equation}
\delta_{ij} \defeq \vu_i^\trp\vu_j-\vu_i^{*\trp}\vu_j^* \,.
\label{eq:delta}
\end{equation}
Notice that $\measuredangle(\vu_{i},\vu_{j})-\measuredangle(\vu^*_{i},\vu^*_{j})$ and $\delta_{ij}$ have opposite signs, and $\delta_{ij}$ is a function of $\vx$. We define a \emph{tangential field} $\bm f_t(\vx)$ and a \emph{normal field} $\bm f_n(\vx)$ as:
\begin{subequations}
\label{eq:flow}
\begin{align}
\bm f_t(\vx) &= -\mP(\vv)\vv^*, \label{eq:flowt}\\
\bm f_n(\vx) &= \sign(\delta_{\hat{\imath}\hat{\jmath}})\vv, \label{eq:flown}
\end{align}
\end{subequations}
where indices $\hat\imath$ and $\hat\jmath$ are chosen as:
\begin{equation}
  \hat{\imath},\hat{\jmath} = \argmin_{\substack{i,j \\ \vu^*_i\neq\vu^*_j}} \delta_{ij}.
  \label{eq:chosenpair}
\end{equation}

These two vector fields are the building blocks of our navigational vector field, which we present in Section~\ref{sec:combined}. In the remainder of this subsection, we instead focus on the two vector fields individually, and discuss their behavior. First, we show that these two vector fields are always orthogonal to each other. Then, we show that $\bm f_t$ maintains the sum of distances from landmarks and forces $\vv$ to reach $\vv^*$, while $\bm f_n$ uses the angle between the bearings as a proxy for the distance from the home location, and adjusts the actual distance from the landmarks.

\begin{lemma}\label{lemma:ortho}
The two vector fields $\bm f_t$ and $\bm f_n$ are orthogonal.
\end{lemma}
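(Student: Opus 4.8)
The plan is to prove Lemma~\ref{lemma:ortho} by directly computing the Euclidean inner product $\bm f_t^\trp\bm f_n$ and showing it vanishes wherever both fields are defined. The single structural fact driving the argument is that the projection matrix annihilates its defining direction: since $\vv$ is a unit vector, the remark following \eqref{eq:projection} already records that $\vv$ is the eigenvector of $\mP(\vv)$ associated with the zero eigenvalue, so $\mP(\vv)\vv=\vct 0$.

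With this in hand, I would substitute the definitions \eqref{eq:flowt} and \eqref{eq:flown}. The scalar $\sign(\delta_{\hat\imath\hat\jmath})$ factors out of the product, and using the symmetry of $\mP(\vv)$ (also noted after \eqref{eq:projection}) I would transpose the projection onto the $\vv$ factor, obtaining $\bm f_t^\trp\bm f_n=-\sign(\delta_{\hat\imath\hat\jmath})\,\vv^{*\trp}\mP(\vv)\vv$. The inner product $\mP(\vv)\vv=\vct 0$ then forces the whole expression to zero, regardless of the value of $\vv^*$ or of the sign.

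Geometrically this is transparent: $\bm f_t=-\mP(\vv)\vv^*$ lies in the range of $\mP(\vv)$, which is precisely the hyperplane orthogonal to $\vv$, whereas $\bm f_n$ is a scalar multiple of $\vv$ itself, and any vector in $\vv^\perp$ is orthogonal to any vector parallel to $\vv$. I do not anticipate a genuine obstacle here; the computation is a one-liner. The only points worth stating explicitly are that the identity is valid on the common domain of the two fields (away from the landmarks and from the degenerate geometric-median set where $\vv$ is undefined, per the preceding remark), and that orthogonality persists even on $\cD^*_{ij}$, where $\delta_{\hat\imath\hat\jmath}=0$ makes $\bm f_n$ vanish, since the zero vector is trivially orthogonal to $\bm f_t$.
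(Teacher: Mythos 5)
Your proposal is correct and matches the paper's own proof: both arguments reduce to the fact that $\mP(\vv)$ annihilates $\vv$ (equivalently, $\vv^\trp\mP(\vv)=\vct 0$ by symmetry), so $\bm f_n^\trp\bm f_t=0$ wherever both fields are defined. The additional remarks about the common domain and the vanishing of $\bm f_n$ on $\cD^*_{ij}$ are fine but not needed.
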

\begin{proof}
Since $\vv^\trp\mP(\vv)=\vv^\trp(\mI_d-\vv\vv^\trp)$ is zero, we have $\bm f_n^\trp\bm f_t= 0$ everywhere.
\end{proof}

\begin{lemma}
The integral curves of $\bm f_t$ correspond to $k$-ellipsoids.
\end{lemma}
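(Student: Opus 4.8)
The plan is to show that the sum-of-distances function $\vartheta(\vx)=\sum_{i=1}^{k}\norm{\vx-\vp_i}$ is conserved along every integral curve of $\bm f_t$. Since a $k$-ellipsoid is, by Definition~\ref{def:kellipsoid}, exactly a level set $\{\vx:\vartheta(\vx)=r\}$, conservation of $\vartheta$ forces each integral curve to remain inside one such level set, which is precisely the claim. Geometrically, $\bm f_t$ is engineered to be tangent to the ellipsoids, and the computation below makes this precise.

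First I would relate $\vv$ to the gradient of $\vartheta$. By Lemma~\ref{lemma:hessian} the gradient is $\frac{\partial\vartheta}{\partial\vx}=-\sum_{i=1}^{k}\vu(\vx,\vp_i)$, so by the definition of $\vv$ we have $\frac{\partial\vartheta}{\partial\vx}=-\norm{\sum_{i=1}^{k}\vu_i}\,\vv$. Thus, wherever $\vv$ is defined, the gradient of $\vartheta$ is a (strictly negative) scalar multiple of $\vv$; in particular it is collinear with $\vv$.

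Next I would differentiate $\vartheta$ along a solution $\vx(t)$ of $\dot\vx=\bm f_t(\vx)$. By the chain rule,
\[
\frac{\de}{\de t}\vartheta(\vx(t))=\Big(\frac{\partial\vartheta}{\partial\vx}\Big)^{\!\trp}\bm f_t
=\norm{\textstyle\sum_i\vu_i}\,\vv^\trp\mP(\vv)\vv^*,
\]
where the two sign changes coming from the gradient and from the definition $\bm f_t=-\mP(\vv)\vv^*$ cancel. Since $\vv$ is a unit vector, $\vv^\trp\mP(\vv)=\vv^\trp-(\vv^\trp\vv)\vv^\trp=\vct 0$ — the same identity already exploited in Lemma~\ref{lemma:ortho} — so this derivative vanishes identically. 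Hence $\vartheta$ is constant along the curve, and the curve lies on the $k$-ellipsoid $\{\vx:\vartheta(\vx)=\vartheta(\vx(0))\}$.

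The conservation computation itself is short; the only care needed is with the domain. The field $\bm f_t$ (through $\vv$) is undefined at the foci and at the geometric median, so an integral curve is confined to the connected component of a level set that excludes these points. I therefore expect the main, and rather minor, obstacle to be the bookkeeping on the domain of definition rather than the conservation argument — in particular, noting that on a fixed level set the excluded points form a degenerate (measure-zero) subset, so that the integral curves genuinely trace out the $k$-ellipsoids rather than terminating prematurely.
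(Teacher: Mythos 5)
Your proof is correct and follows essentially the same route as the paper's: both arguments observe that the gradient of $\vartheta$ is parallel to $\vv$ while $\bm f_t$ is orthogonal to $\vv$ (via $\vv^\trp\mP(\vv)=\vct 0$), so the directional derivative of $\vartheta$ along $\bm f_t$ vanishes and each integral curve stays on a level set of $\vartheta$. Your explicit chain-rule computation and the remark on the domain of definition are just a more careful write-up of the paper's one-line argument.
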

\begin{proof}
Following Definition~\ref{def:kellipsoid}, a $k$-ellipsoid is the set of points over which the distance function $\vartheta(\vx)$ is constant. The gradient of $\vartheta$, $-\sum_{i=1}^k\vu_i$, is parallel to $\vv$. Since $\bm f_t$ is always orthogonal to $\vv$, then $\vartheta$ is constant in the direction of $\bm f_t$, hence the claim.
\end{proof}

In the following, we use a navigation field to determine a moving direction, without considering its magnitude; with this motivation, we introduce the following:

\begin{definition}
We denote the normalized version of a vector field $\bm f(\cdot)$ by $\bm f^\circ\defeq \frac{\bm f}{\norm{\bm f}}$, with $\bm f^\circ=\vct 0$ whenever $\norm{\bm f}=0$ or $\bm f$ is undefined. 
\end{definition}
Examples of the normalized versions of the vector fields~\eqref{eq:flow} for $k=2$ and $k=3$ landmarks and $d=2$ are shown in Fig.~\ref{fig:demonstration}. The tangential flows in Fig.\ref{fig:elliptical_flow2} and Fig.\ref{fig:elliptical_flow3} lie on 2-ellipsoids (or simply ellipses) and 3-ellipsoids respectively. The normal flows in Fig.\ref{fig:normal_flow2} and Fig.\ref{fig:normal_flow3}, intuitively, move away from the landmarks if a negative $\delta_{\hat{\imath}\hat{\jmath}}$ exists (i.e., when $\measuredangle(\vu_{\hat{\imath}},\vu_{\hat{\imath}})>\measuredangle(\vu_{\hat{\imath}}^*,\vu_{\hat{\imath}}^*)$), and move closer to the landmarks if all $\delta_{ij}$ values are positive until the angle between one pair of bearings is equal to the angle between their corresponding desired bearings (i.e. $\delta_{\hat{\imath}\hat{\jmath}}=0$ and for other pairs, $\delta_{ij}\geq 0$). Let $\cD^*$ be the boundary of the union of all $\cD_{ij}^*$ sets and their interior points; then $\delta_{\hat{\imath}\hat{\jmath}}$ is zero on $\cD^*$, negative inside of it, and positive outside of it (see Fig.~\ref{fig:normal_flow3}). The condition $\vu_i^*\neq\vu_j^*$ in \eqref{eq:chosenpair} ensures that the angle between $\vu_i$ and $\vu_j$ can be used as a measure for adjusting the distance of camera to landmarks, since if equality holds it means that the corresponding landmarks $\vp_i$ and $\vp_j$ and the goal position $\vx^*$ all lie on a line and moving closer or away from the landmarks on that line does not change the value of $\delta_{ij}$. This assumption is not overly restrictive, since coincident bearings represent redundant constraints for defining the home location, and are not practically useful. We show here that moving closer towards landmarks by $\bm f_n$ increases the overall pairwise angles, and vice versa. 

\begin{theorem}\label{theorem:away}
Moving in the direction $\vv$ decreases the sum of the cosine of pairwise angles $\measuredangle(\vu_i,\vu_j)$.
\end{theorem}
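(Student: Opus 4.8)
The plan is to recognize that the quantity in question is, up to an additive constant, one half the squared norm of the bearing sum $\vg \defeq \sum_{i=1}^{k}\vu_i$, and then simply to differentiate this scalar along $\vv$. Since each $\vu_i$ is a unit vector, $\sum_{i,j}\vu_i^\trp\vu_j = \norm{\vg}^2$, so separating out the $k$ diagonal terms gives $\sum_{i<j}\vu_i^\trp\vu_j = \frac{1}{2}\bigl(\norm{\vg}^2 - k\bigr)$. Because $\vv = \vg/\norm{\vg}$ is parallel to $\vg$, it therefore suffices to show that the directional derivative of $\norm{\vg}^2$ along $\vv$ is nonpositive wherever $\vv$ is defined (i.e. wherever $\vg\neq\vct 0$).

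The central step is to compute the Jacobian of $\vg$ and to identify it with the Hessian of $\vartheta$. Writing $\vu_i=\vu(\vx,\vp_i)$ with numerator $\vp_i-\vx$, Lemma~\ref{lemma:unitvec} applied with $\partial(\vp_i-\vx)/\partial\vx = -\mI_d$ yields $\partial\vu_i/\partial\vx = -\mP(\vu_i)/\norm{\vx-\vp_i}$, since the projection matrix depends only on the direction of its argument. Summing over $i$ and comparing with the Hessian formula in Lemma~\ref{lemma:hessian}, the Jacobian of $\vg$ equals exactly $-\mH$, where $\mH = \sum_{i}\mP(\vu_i)/\norm{\vx-\vp_i}$. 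This is the key identification, and it is natural since $\vg = -\partial\vartheta/\partial\vx$, so its Jacobian is minus the Hessian of the distance function.

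Differentiating $\norm{\vg}^2 = \vg^\trp\vg$ then gives the gradient $(\partial\vg/\partial\vx)^\trp\vg = -\mH\vg$, using the symmetry of $\mH$. Taking its inner product with $\vv = \vg/\norm{\vg}$ produces a directional derivative proportional to $-\vg^\trp\mH\vg/\norm{\vg}$, and since $\mH$ is positive semidefinite by Lemma~\ref{lemma:hessian}, this quantity is nonpositive; hence the sum of cosines decreases along $\vv$, and strictly so whenever the foci are non-collinear by the positive-definiteness clause of the same lemma. The only thing requiring care is the chain of sign conventions linking $\vg$, $\vartheta$, and $\mH$ — once $\partial\vg/\partial\vx = -\mH$ is established, the positive semidefiniteness of $\mH$ does all the remaining work, so I expect no genuine obstacle beyond this bookkeeping.
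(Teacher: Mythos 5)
Your proof is correct and follows essentially the same route as the paper's: both take $V=\tfrac12\norm{\sum_i\vu_i}^2$ as the (shifted) sum of pairwise cosines, identify the Jacobian of $\vg=\sum_i\vu_i$ with $-\mH$, and conclude from the positive (semi)definiteness of the Hessian in Lemma~\ref{lemma:hessian} that the derivative along $\vv$ is $-\vg^\trp\mH\vg/\norm{\vg}\leq 0$. Your direct chain-rule computation of the gradient of $\norm{\vg}^2$ is, if anything, cleaner bookkeeping than the paper's term-by-term regrouping of the pairwise sum, but it is not a genuinely different argument.
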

\begin{proof}
Take the Lyapunov function $V=\frac{1}{2}\norm{\sum_{i=1}^{k}\vu_i}^2$. By expanding $V$, we rewrite it as $V=\frac{k}{2}+\sum_{i<j}\vu_i^\trp\vu_j$, which is the sum of cosine of all pairwise angles between the landmarks plus a constant term. Taking the gradient, $\frac{\partial V}{\partial \vx}=-\sum_{i<j}\big(\frac{\mP(\vu_i)}{\norm{\vx-\vp_i}}\vu_j+\frac{\mP(\vu_j)}{\norm{\vx-\vp_j}}\vu_i\big)$ and since $\mP(\vu_i)\vu_i=\vct 0$, we have $\frac{\partial V}{\partial \vx}=-\sum_{i<j}(\frac{\mP(\vu_i)}{\norm{\vx-\vp_i}}+\frac{\mP(\vu_j)}{\norm{\vx-\vp_j}})(\vu_i+\vu_j)=-\frac{1}{2}(\sum_{i=1}^k \frac{\mP(\vu_i)}{\norm{\vx-\vp_i}})(\sum_{i=1}^k\vu_i)$. Taking the flow $\dot{\vx}=\vv$, we get $\dot{V}=\frac{\partial V}{\partial \vx}\dot{\vx}=-\vv^\trp\mH\big(\sum_{i=1}^k\vu_i\big)$ which is always negative ($\mH$ is the Hessian from Lemma~\ref{lemma:hessian}) unless at geometric median ($\sum_{i=1}^k\vu_i=\vct 0$), or when all foci are collinear along the line containing foci.
\end{proof}
The behavior induced by $\bm f_n$ forces the robot to converge to $\cD^*$. Since the obstacle sets $\cO_{ij}$ are contained in the interior of $\cD^*$, staying on $\cD^*$ or its exterior guarantees collision avoidance with the FOV obstacle sets. Even though we showed in Theorem~\ref{theorem:away} that moving away from the landmarks reduces the overall angles, if the robot has a small field of view (i.e. $\phifov$ is small) and starts from the interior of $\cD^*$ (i.e. $\delta_{\hat\imath\hat\jmath}<0$) close to landmarks, it might enter an obstacle set on its way towards $\cD^*$. Even if such case happens, one remedy for it would to keep moving away in the same direction until the landmarks are back in sight. Due to boundedness of the $\cO_{ij}$ sets, there should exist a $k$-ellipsoid $\partial\cQ(r)$ for some positive $r$ that encapsulates all the $\cO_{ij}$ sets.

The behavior of the tangential flow $\bm f_t$ is less intuitive. First, we show that $\bm f_t$ yields convergence of $\vv$ to $\vv^*$. Then, we investigate its equilibrium points.
\begin{theorem}\label{thm:ft}
Following the flow $\dot{\vx} = \bm f_t(\vx)$ leads to convergence to $\vv=\vv^*$ almost globally.
\end{theorem}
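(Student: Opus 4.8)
The plan is to build a Lyapunov function measuring the misalignment between $\vv$ and $\vv^*$, show it is monotonically decreasing along the flow, identify the two resulting equilibria, and finally rule out the undesired one on a set of measure zero. I would take
\begin{equation}
V \defeq 1 - \vv^\trp\vv^*,
\end{equation}
which is nonnegative and vanishes exactly when $\vv=\vv^*$. The first step is to compute $\dot V$ along $\dot\vx=\bm f_t(\vx)$. Writing $\bm g\defeq\sum_{i=1}^k\vu_i=-\frac{\partial\vartheta}{\partial\vx}$, Lemma~\ref{lemma:unitvec} gives the Jacobian $\frac{\partial\vv}{\partial\vx}=-\norm{\bm g}^{-1}\mP(\vv)\mH$, since $\mP(\bm g)=\mP(\vv)$ and $\frac{\partial\bm g}{\partial\vx}=-\mH$ with $\mH$ the Hessian of Lemma~\ref{lemma:hessian}. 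Substituting $\dot\vx=-\mP(\vv)\vv^*$ and using symmetry of $\mP(\vv)$ and $\mH$ yields
\begin{equation}
\dot V = -\norm{\bm g}^{-1}\,\vv^{*\trp}\mP(\vv)\,\mH\,\mP(\vv)\vv^* = -\norm{\bm g}^{-1}\,\vw^\trp\mH\vw,
\end{equation}
with $\vw\defeq\mP(\vv)\vv^*$. Since $\mH$ is positive semidefinite and $\norm{\bm g}>0$ away from the geometric median (which lies strictly inside every enclosing $k$-ellipsoid $\partial\cQ(r)$), this gives $\dot V\le 0$.

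The second step is to characterize the equilibria and invoke LaSalle. For non-collinear foci $\mH$ is positive definite by Lemma~\ref{lemma:hessian}, so $\dot V=0$ forces $\vw=\mP(\vv)\vv^*=\vct 0$, i.e. $\vv^*$ collinear with $\vv$, which on the unit sphere means $\vv=\vv^*$ or $\vv=-\vv^*$; these are precisely the points where $\bm f_t=\vct 0$. Because every integral curve of $\bm f_t$ stays on a compact level set of $\vartheta$, LaSalle's invariance principle applies on that $k$-ellipsoid, and each trajectory converges to the invariant set $\{\vv=\vv^*\}\cup\{\vv=-\vv^*\}$. Strict convexity of the ellipsoid (positive-definite $\mH$) makes the Gauss map $\vx\mapsto\vv$ a diffeomorphism onto the unit sphere of $\real d$, so each of these two sets is a single point on the ellipsoid.

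It remains to establish convergence to $\vv=\vv^*$ outside a measure-zero set. The point $\vv=\vv^*$ is the strict global minimum of $V$, hence asymptotically stable, while $\vv=-\vv^*$ is the strict global maximum; since $\dot V<0$ off the equilibria, $\vv=-\vv^*$ is unstable. I would conclude by pushing the flow forward through the Gauss map and linearizing the reduced sphere dynamics at $-\vv^*$: up to the positive factor $\norm{\bm g}^{-1}$ and the positive-definite reweighting by $\mH$, the linearization acts on the tangent space $\{\vv^*\}^\perp$ as $\norm{\bm g}^{-1}\mP(\vv^*)\mH\mP(\vv^*)$, which is positive definite, so $-\vv^*$ is a hyperbolic source whose stable set is the single point itself. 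Its basin therefore has Lebesgue measure zero, and every remaining trajectory satisfies $\vv\to\vv^*$, giving almost-global convergence.

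The main obstacle is this last step: the Lyapunov decrease $\dot V\le 0$ is routine given Lemmas~\ref{lemma:unitvec} and~\ref{lemma:hessian}, but making the measure-zero claim rigorous requires verifying hyperbolicity of the antipodal equilibrium—checking that the positive-definite $\mH$-reweighting does not flip the sign of any eigenvalue of the linearization—and then invoking the stable-manifold theorem to conclude that the exceptional set is a lower-dimensional submanifold.
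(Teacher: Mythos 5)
Your proposal follows the paper's own argument almost exactly: your $V=1-\vv^\trp\vv^*$ is the same function as the paper's $\tfrac12\norm{\vv-\vv^*}^2$, and the computation $\dot V=-\norm{\sum_{i}\vu_i}^{-1}\bm f_t^\trp\mH\bm f_t$ via Lemma~\ref{lemma:unitvec} and Lemma~\ref{lemma:hessian} is identical. Where you go beyond the paper is the ``almost globally'' step: the paper simply declares the points with $\vv=-\vv^*$ to be unstable equilibria and defers the measure-zero claim to a later remark about the curve $\bm\xi_{-\vv^*}$, whereas you make the LaSalle reduction to a level set of $\vartheta$ and the hyperbolic-source/stable-manifold argument explicit; your linearization $\norm{\bm g}^{-1}\mP(\vv^*)\mH\mP(\vv^*)$ on $\{\vv^*\}^\perp$ is correct (the variations of the outer factors multiply $\mP(\vv^*)\vv^*=\vct 0$ and drop out), so this part is a genuine strengthening of what the paper writes.

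The one gap is that you conclude $\dot V=0\Rightarrow\mP(\vv)\vv^*=\vct 0$ only ``for non-collinear foci.'' But two foci are always collinear --- and $k=2$ is a central case in the paper --- and for any collinear configuration $\mH(\vx)$ is merely positive semidefinite at points $\vx$ on the line through the foci (Lemma~\ref{lemma:hessian}), so positive definiteness cannot be invoked there and your chain of implications breaks. The repair is the one the paper uses: on that line the null space of $\mH$ is spanned by the common bearing direction, which is exactly $\vv$, while $\vw=\mP(\vv)\vv^*=-\bm f_t$ is orthogonal to $\vv$ by Lemma~\ref{lemma:ortho}; hence $\vw$ never lies in the null space and $\vw^\trp\mH\vw>0$ whenever $\vw\neq\vct 0$. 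Add that observation and your proof covers all configurations.
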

\begin{proof}
Take the $V=\frac{1}{2}\norm{\vv-\vv^*}^2$ as Lyapunov function. Using the chain rule, $\frac{\partial V}{\partial \vx} =(\vv-\vv^*)^\trp\frac{\partial \vv}{\partial \vx}$ and $\frac{\partial\vv}{\partial\vx}=\norm{\sum_{i=1}^k \vu_i}^{-1}\mP(\sum_{i=1}^k \vu_i)\frac{\partial \sum_{i=1}^k\vu_i}{\partial\vx}$ (Lemma~\ref{lemma:unitvec}). From Lemma~\ref{lemma:hessian}, we have $\frac{\partial \sum_{i=1}^k\vu_i}{\partial\vx}=-\mH$, where $\mH$ is the Hessian of $\vartheta$. Also from the definition of \eqref{eq:projection} we have $\mP(\sum_{i=1}^k\vu_i)=\mP(\vv)$. Hence, $\dot{V}=\frac{\partial V}{\partial \vx}\dot{\vx}=\norm{\sum_{i=1}^k \vu_i}^{-1}(\vv-\vv^*)^\trp\mP(\vv)\mH\mP(\vv)\vv^*$. Since $\vv^\trp\mP(\vv)=\vct 0$ and $\mP(\cdot)$ is symmetric, we can simplify the expression for $\dot{V}$ as $\dot{V}=-\norm{\sum_{i=1}^k \vu_i}^{-1}\bm f_t^\trp\mH\bm f_t$. We know from Lemma~\ref{lemma:hessian} that $\mH$ is positive definite, or positive semidefinite when all landmarks are collinear (in which case all bearing vectors are parallel). For the first case, we have $\dot{V}<0$ whenever $\bm f_t\neq\vct 0$. For the second case, $\vv$ is also parallel to all bearings and the zero eigenvector of $\mH$, and because $\bm f_t$ is orthogonal to $\vv$ (see Lemma~\ref{lemma:ortho}) we again have $\dot{V}<0$. However, we have $\bm f_t=\vct 0$ if $\vv=\vv^*$ or $\vv=-\vv^*$. It means that any point $\vx_0$ with $\vv(\vx_0)=-\vv^*$ is an unstable equilibrium point of $\bm f_t$.
\end{proof}
In order to find the equilibrium points of $\bm f_t$, we need to find points where $\vv=\vv^*$ or $\vv=-\vv^*$, since $\mP(\vv^*)\vv^*=\mP(-\vv^*)\vv^*=\vct 0$. In a more general approach, it would be beneficial to know the properties of the set of points where $\vv=\vv_0$ for a given unit vector $\vv_0$. We will show that such points form a curve, which we call an \emph{isonormal curve}, and start from a/the geometric median point of the foci and moves away from the foci towards infinity.
\begin{proposition}
Let $\bm\xi_{\vv_0}=\{\vx \in \real{d}:\vv(\vx)=\vv_0\}$ be the set of points where $\vv$ is equal to a value $\vv_0\in\sphere{d-1}$. Then:
\begin{enumerate}
\item $\bm\xi_{\vv_0}$ is a 1-D open curve.
\item Each focus $\vp_i$ belongs to $\bm\xi_{\vv_0}$ for any $\vv_0\in\cU_i$, where
\begin{equation}
\cU_i\defeq\{\frac{\ve}{\norm{\ve}}: \ve\in\sum_{j\neq i}\vu(\vp_i,\vp_j)\oplus\mathbb{S}^{d-1}\}.
\end{equation}
\item Every point in $\bm\xi_{\vv_0}\setminus \cP$ is regular.
\end{enumerate}
\end{proposition}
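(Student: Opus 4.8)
The plan is to work throughout with the unnormalized field $\bm g(\vx)\defeq\sum_{i=1}^k\vu_i=-\frac{\partial\vartheta}{\partial\vx}$, so that $\vv=\bm g/\norm{\bm g}$ and $\bm\xi_{\vv_0}$ is exactly the set where $\bm g(\vx)$ is a \emph{positive} multiple of $\vv_0$, equivalently $\mP(\vv_0)\bm g(\vx)=\vct 0$ together with $\vv_0^\trp\bm g(\vx)>0$. The whole argument then hinges on the Jacobian of $\vv$, which by Lemma~\ref{lemma:unitvec} and $\frac{\partial\bm g}{\partial\vx}=-\mH$ (Lemma~\ref{lemma:hessian}) equals $-\norm{\bm g}^{-1}\mP(\bm g)\mH$. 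On $\bm\xi_{\vv_0}$ we have $\bm g\parallel\vv_0$, hence $\mP(\bm g)=\mP(\vv_0)$, so $\frac{\partial\vv}{\partial\vx}=-\norm{\bm g}^{-1}\mP(\vv_0)\mH$ there.

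I would prove item~(3) first, since items~(1) and~(3) are two faces of one regular-value statement. At $\vx\in\bm\xi_{\vv_0}\setminus\cP$ the Jacobian above maps $\real d$ into $T_{\vv_0}\sphere{d-1}=\vv_0^\orth$, and I claim it is onto, i.e. $\rank(\mP(\vv_0)\mH)=d-1$. When the foci are not collinear $\mH$ is positive definite (Lemma~\ref{lemma:hessian}), hence invertible, so the rank equals $\rank\mP(\vv_0)=d-1$. In the collinear case $\mH$ is positive semidefinite with one-dimensional kernel along the line direction $\vw$, and its image is $\vw^\orth$; but any point of $\bm\xi_{\vv_0}$ lying on that line has all bearings equal to $\pm\vw$, forcing $\vv_0=\pm\vw$, whence $\ker\mP(\vv_0)=\vecspan(\vw)=\ker\mH$ and the composition still has rank $d-1$ (off the line $\mH$ is again definite). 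Thus every point of $\bm\xi_{\vv_0}\setminus\cP$ is regular.

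Item~(1) then follows from the preimage (regular-value) theorem: $\vv$ is smooth on $\real d$ minus the foci and the geometric median(s), $\vv_0$ is a regular value by the previous paragraph, so $\bm\xi_{\vv_0}=\vv^{-1}(\vv_0)$ is an embedded submanifold of dimension $d-(d-1)=1$. To see it is an \emph{open} curve (non-compact, not a closed loop) I would inspect its two ends: as $\norm{\vx}\to\infty$ every bearing tends to $-\vx/\norm{\vx}$, so $\vv$ approaches $-\vx/\norm{\vx}$, forcing $\vx/\norm{\vx}\to-\vv_0$ and the curve to escape to infinity along the $-\vv_0$ ray; as $\vx$ tends to a geometric median $\bm g\to\vct 0$ and $\vv$ is undefined, so that endpoint is excluded. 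Hence each component is a non-compact $1$-manifold without interior endpoint, an open curve running from the (excluded) median toward infinity.

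For item~(2) I would study the limit of $\vv$ at a focus. Approaching $\vp_i$ along $\vx=\vp_i-t\vw$ with $\vw\in\sphere{d-1}$ gives $\vu_i\equiv\vw$ while $\vu_j\to\vu(\vp_i,\vp_j)$ for $j\neq i$, so $\bm g\to\ve\defeq\vw+\sum_{j\neq i}\vu(\vp_i,\vp_j)\in\sum_{j\neq i}\vu(\vp_i,\vp_j)\oplus\sphere{d-1}$ and $\vv\to\ve/\norm{\ve}$; letting $\vw$ range over the sphere sweeps out exactly the set $\cU_i$ of admissible limiting directions, so $\vp_i\in\overline{\bm\xi_{\vv_0}}$ precisely when $\vv_0\in\cU_i$. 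The main obstacle is this last step: the limit computation only shows $\vv_0$ is \emph{approached} at $\vp_i$, whereas genuine membership of $\vp_i$ in $\overline{\bm\xi_{\vv_0}}$ requires points on which $\vv$ equals $\vv_0$ \emph{exactly} arbitrarily near $\vp_i$. I would close this gap by a continuity/degree argument on a small sphere $\norm{\vx-\vp_i}=\epsilon$: there $\vx\mapsto\vv(\vx)$ is a uniformly small perturbation of the map $\vw\mapsto\ve/\norm{\ve}$ whose image is $\cU_i$, so for $\vv_0$ in the interior of $\cU_i$ the value $\vv_0$ is attained on each such sphere, yielding a sequence in $\bm\xi_{\vv_0}$ converging to $\vp_i$ as $\epsilon\to0$.
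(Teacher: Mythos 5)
Your route is genuinely different from the paper's. You treat $\vv_0$ as a regular value of the map $\vv:\real{d}\setminus(\cP\cup\{\textrm{median}\})\to\sphere{d-1}$ and invoke the preimage theorem, whereas the paper first proves that each sublevel set $\cQ(r)$ is strictly convex, deduces that the inward Gauss map of each $k$-ellipsoid is a bijection onto $\sphere{d-1}$, and hence that $\bm\xi_{\vv_0}$ meets every $\partial\cQ(r)$ in exactly one point; the curve is then parametrized by $r$ and regularity is read off from the explicit formula $\frac{\partial\bm\zeta}{\partial r}=-\mH^{-1}\vv_0(\bm\eta^\trp\mH^{-1}\vv_0)^{-1}$, whose direction $\mH^{-1}\vv_0$ matches the tangent space $\ker(\mP(\vv_0)\mH)$ you compute. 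Your rank argument for $\mP(\vv_0)\mH$, including the collinear case, is correct and gives item (3) cleanly, and on item (2) you are in fact more careful than the paper, which only asserts that $\cU_i$ contains the limits of $\vv$ at $\vp_i$ (though note your degree argument on small spheres only covers $\vv_0$ in the interior of $\cU_i$, while the statement claims membership for every $\vv_0\in\cU_i$).

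The genuine gap is in item (1): the regular value theorem only yields that $\bm\xi_{\vv_0}$ is a one-dimensional embedded submanifold, i.e.\ a disjoint union of open curves, and your end-behavior analysis is explicitly phrased per component (``each component is a non-compact $1$-manifold\dots''). The statement asserts a single curve. Monotonicity of $\vartheta$ along the tangent direction (since $\vv_0^\trp\mH^{-1}\vv_0>0$) would show each component crosses each $k$-ellipsoid at most once, but a priori two distinct components could cross the \emph{same} ellipsoid at two different points sharing the inward normal $\vv_0$. Excluding that is exactly the content of the paper's strict-convexity step: the linear functional $\vx\mapsto\vv_0^\trp\vx$ attains its extrema on the compact, strictly convex body $\cQ(r)$ at unique points, so the inward normal $\vv_0$ occurs at exactly one point of $\partial\cQ(r)$. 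Your proof needs this (or an equivalent global uniqueness argument) to conclude connectedness; without it you have proved ``$\bm\xi_{\vv_0}$ is a union of open curves,'' not ``$\bm\xi_{\vv_0}$ is an open curve.''
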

\begin{proof}
  A $k$-ellipsoid $\cS=\partial\cQ(r)$ is the boundary of the convex and compact set $\cQ(r)$. Moreover, due to positive definiteness of the hessian $\mH$, we have that $\cQ$ is strictly convex (take any two points $\vs_1$, $\vs_2$ on $\cS$, due to strict convexity the value of $\vartheta$ is strictly less than $r$ on the line segment between $\vs_1$ and $\vs_2$). This is also true when $\mH$ is positive semidefinite (i.e. when all landmarks are contained on a line $\ell$) since in this case $\mH$ is positive definite everywhere except along $\ell$, and since the value of $\vartheta$ changes along $\ell$, $\cS$ does not contain any line segments. Therefore, $\cQ$ is strictly convex.\\
Due to compactness and strict convexity of $\cQ$, every linear function has one point of minimum and one point of maximum, which means that each vector $\vv_0$ is normal to $\cS$ at exactly two points. By restricting normal vectors to be towards the inside of $\cS$ (the opposite direction of gradient of $\vartheta$), for a given normal vector $\vv_0$, there exists only a single point on $\cS$ where $\vv_0$ is normal to $\cS$; equivalently stated, the Gauss map \cite{gauss2005general} of $\cS$ (i.e. $\cS\mapsto\mathbb S^{d-1}$) is surjective.\\
Uniqueness of the normal vector $\vv_0$ on $\partial\cQ$ implies that $\bm\xi_{\vv_0}$ intersects with any $k$-ellipsoid ($\partial\cQ(r)$) at a single point. Starting from $r_{\textrm{min}}=\min_\vx\vartheta(\vx)$, which happens at a/the geometric median point, as $r\rightarrow\infty$ we have an infinite number of points in $\bm\xi_{\vv_0}$.
Here we show that these points form a 1-D curve which is regular everywhere except at the foci. Let the function $\bm\zeta(r)\in\bm\xi_{\vv_0}$ represent the point on $\cS$ with $\vv(\bm\zeta)=\vv_0$ and $\vartheta(\bm\zeta)=r$. By definition, we have $\bm\eta\defeq\sum_{i=1}^k\vu(\bm\zeta,\vp_i)\propto \vv_0$. If we change $\bm\zeta$ such that $\frac{\partial}{\partial\bm\zeta}\bm\eta\propto\vv_0$ then $\bm\zeta+\partial\bm\zeta$ remains in $\bm\xi_{\vv_0}$. By taking the derivative, we see that $-\mH(\bm\zeta)\partial\bm\zeta$ needs to be proportional to $\vv_0$, or equivalently $\partial\bm\zeta\propto-\mH(\bm\zeta)^{-1}\vv_0$. Regularizing with respect to $r$, we get $\frac{\partial\bm\zeta}{\partial r}=-\mH(\bm\zeta)^{-1}\vv_0(\bm\eta^\trp\mH(\bm\zeta)^{-1}\vv_0)^{-1}$. This derivative exists everywhere except at foci. Therefore $\bm\xi_{\vv_0}$ is a curve that is regular everywhere except at foci.\\
Now, we discuss when a focal point $\vp_i$ is contained in any curve $\bm\xi_{\vv_0}$. It is straightforward to see that the set $\cU_i$ contains the limits of $\vv(\vx)$ as $\vx\rightarrow\vp_i$. 
\end{proof}


Intuitively, each isonormal curve $\bm\xi_{\vv_0}$ starts from geometric median and move away from the foci such that they only intersects once with the boundary of each $\cQ(r)$ set. Fig.~\ref{fig:ellipsoid} depicts these curves for different unit vectors $\vv_0$ with four foci. See~\cite[Fig. 2]{karimian2020bearing} for the case with three landmarks.
\begin{remark}
Since the unstable equilibrium points of $\bm f_t$ lie on the $\bm\xi_{-\vv^*}$ curve, $\bm f_t$ is almost globally stable, except on the 1-D set $\bm\xi_{-\vv^*}$ which has a Lebesgue measure zero.
\end{remark}
\begin{figure}[t]\centering
\scalebox{1}{\includegraphics{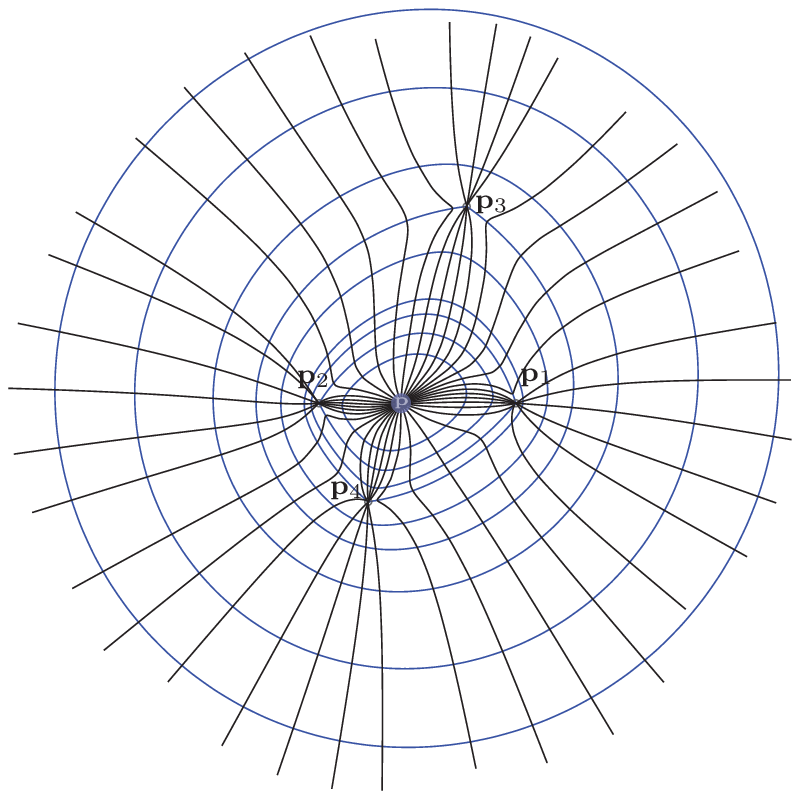}}
\caption{Multiple confocal $4$-ellipsoids in 2-D with various isonormal curves. Point \protect\tikz\protect\node[circle, draw=blue, fill=blue, inner sep=0pt, text=white, minimum size=2pt, scale = 1.7] at (0,0) { \tiny$\vct p$}; is the geometric median of foci.}
\label{fig:ellipsoid}
\end{figure}
\subsection{Combined flow} \label{sec:combined}
In this subsection, we combine the normalized flows $\bm f_t^\circ$ and $\bm f_n^\circ$ together into our final navigational vector field. 
In order to introduce smooth transitions, we use the smooth bump function of degree three defined as:
\begin{equation}
b_\epsilon(x)\defeq
\begin{cases}
0 & x\leq 0 \\
\sum_{i=0}^{3}a_ix^i & 0\leq x\leq \epsilon \\
1 &  \epsilon\leq x
\end{cases}
\label{eq:bump}
\end{equation}
where $a_0=a_1=0$, $a_2=3\epsilon^{-2}$, $a_3=-2\epsilon^{-3}$, and $\epsilon$ is a design parameter.

The idea behind combining the two orthogonal flows $\bm f_t^\circ,\bm f_n^\circ$ is to let $\bm f_t^\circ$ steer the value of $\vv$ close to $\vv^*$ (i.e. take the robot close to the $\bm\xi_{\vv^*}$ curve), and to adjust the view angle by using $\bm f_n^\circ$ only when $\vv$ is close enough to $\vv^*$ if $\delta_{\hat\imath\hat\jmath}>0$. This is because staying on $\cD^*$ might take the robot too close to the landmarks, or cause collision with them (see Fig.~\ref{fig:normal_flow2} and Fig.~\ref{fig:normal_flow3}). However, if $\delta_{\hat\imath\hat\jmath}$ is negative, we always want the robot to move away from the landmarks to avoid entering the obstacle set $\cO_{\hat\imath\hat\jmath}$.  
Here, we introduce an example for how to combine these two flows in order to achieve this desired behavior. The combined field is given by:
\begin{equation}
\bm f(\vx) \defeq g_t(\vx)\bm f_t^\circ(\vx) + g_n(\vx)\bm f_n^\circ(\vx)\,,
\end{equation}
where $g_t$ and $g_n$ are \emph{gain functions} with non-negative values:
\begin{subequations}
\label{eq:flowadj}
\begin{align}
g_t(\vx)&\defeq \min(1,\sqrt{1-\vv^\trp\vv^*})\label{eq:flowadjt}\\
g_n(\vx)&\defeq \max(0,\vv^\trp\vv^*)b_\epsilon(\delta_{\hat\imath\hat\jmath})+b_\epsilon(-\delta_{\hat\imath\hat\jmath})\,.\label{eq:flowadjn}
\end{align}
\end{subequations}
The tangential gain function $g_t$ is:
\begin{enumerate*}
\item equal to one when $\measuredangle(\vv,\vv^*)\geq \frac{\pi}{2}$,
\item less than one as $\vv$ gets close to $\vv^*$, i.e., when $\measuredangle(\vv,\vv^*)< \frac{\pi}{2}$.
\end{enumerate*}
This behavior prioritizes convergence to $\vv^*$ when the goal position and starting positions are on different sides of the landmarks, and also slows down the convergence rate to $\vv^*$ as $\vv\rightarrow\vv^*$.
\begin{figure*}[]
\centering
\hfill
\subfloat[Double integrator, $\lambda_0=1$]{\includegraphics{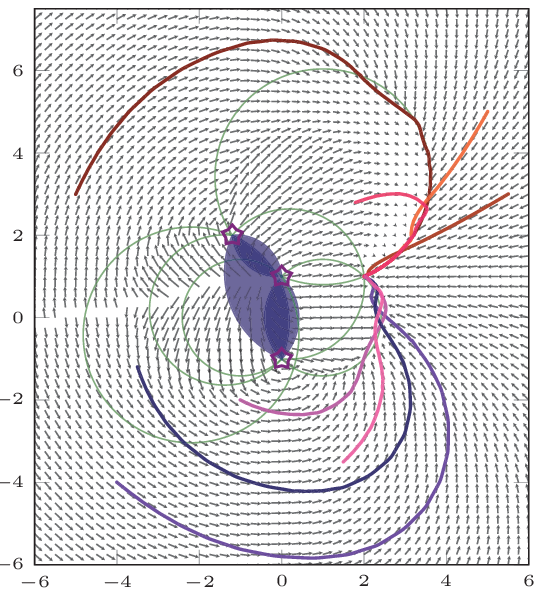}\label{fig:doubleintegrator}} \hfill
\subfloat[Unicycle, $k_\upsilon=1, k_\omega=2$]{\includegraphics{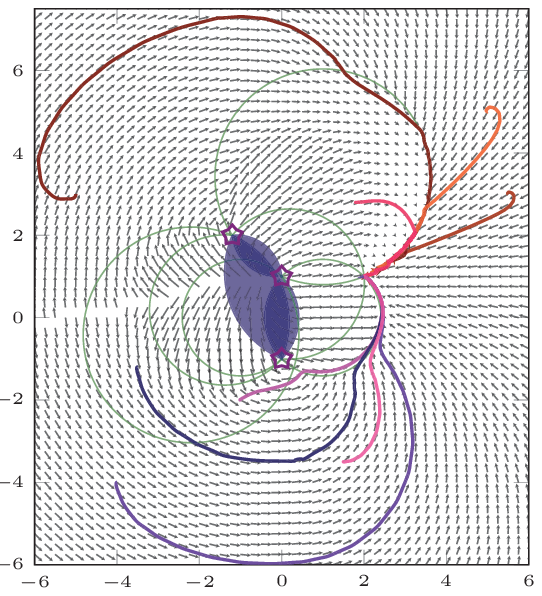}\label{fig:unicycle}} \hfill
\subfloat[Double integrator, $\lambda_0=1$]{\includegraphics{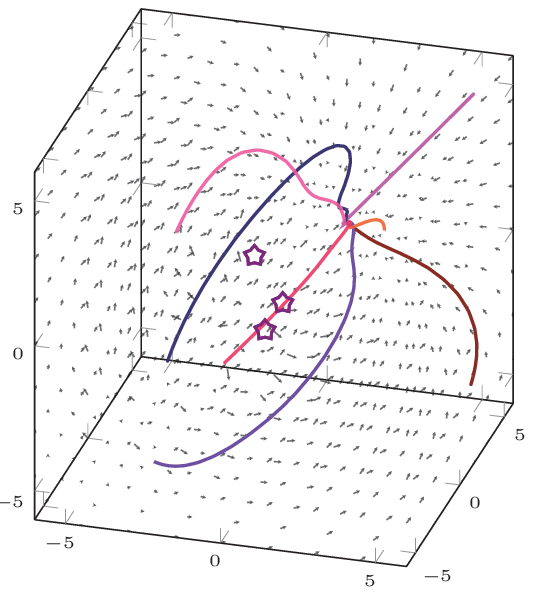}\label{fig:unicycle}} \hfill
\caption{
Plot of trajectories of a double integrator and a unicycle for the visual homing problem with 3 landmarks from different starting points. The goal positions are located at $\vx^*=[2,1]^\trp$ and $\vx^*=[2,1,2]^\trp$, alongside the vector field $\bm f(\vx)$.}
\label{fig:simulation}
\end{figure*}
\begin{figure}[b]
\subfloat[Double integrator]{\scalebox{1}{\includegraphics{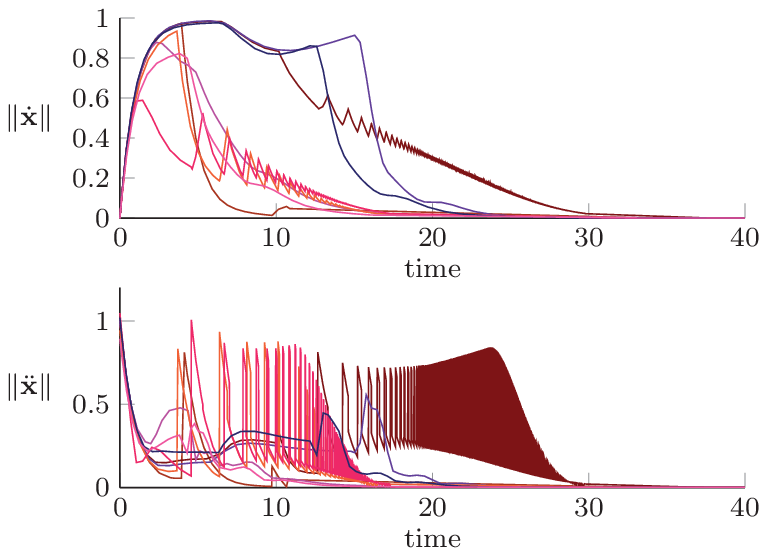}}\label{fig:f1}}\\
\subfloat[Unicycle]{\scalebox{1}{\includegraphics{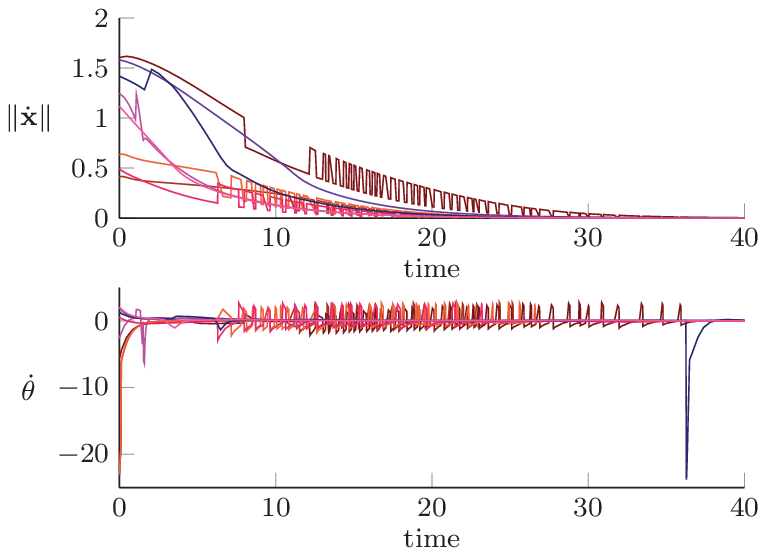}}\label{fig:f2}}
\caption{
In~\protect\subref{fig:f1}, the magnitude of velocity and acceleration of the double integrator systems from Fig~\ref{fig:doubleintegrator} are given. In~\protect\subref{fig:f2}, the linear and angular velocity of trajectories from Fig~\ref{fig:unicycle} are plotted.}
\end{figure}
The normal gain function $g_n$:
\begin{enumerate*}
\item pushes the robot away if $\delta_{\hat\imath\hat\jmath}$ is negative by $b_\epsilon(-\delta_{\hat\imath\hat\jmath})$, \item absorbs the robot towards the foci by $b_\epsilon(\delta_{\hat\imath\hat\jmath})\cos(\measuredangle(\vv,\vv^*))$ once $\measuredangle(\vv,\vv^*)$ is less than $\frac{\pi}{2}$.
\end{enumerate*}
Employing a bump function will smoothen the transition of the normal flow at $\cD^*$. We choose $\epsilon$ such that $\epsilon<\min_{p,q}\vu_p^{*\trp}\vu_q^* -\cos(\phifov)$, to ensures that the smooth transition phase does not fall in any obstacle set $\cO_{ij}$. For the such pair $p,q$ with $\alpha=\vu_p^{*\trp}\vu_q^* -\cos(\phifov)$, we have $b_\epsilon(-\delta_{pq})=1$ if $-\alpha<\delta_{pq}\leq-\epsilon$, meaning that the normal function is pushing away with maximum strength after the smooth transition phase (when $-\epsilon<\delta_{pq}<0$) ends.
\begin{remark}
The visual homing assumes that the desired bearings $\{\vu_i^*\}_{i=1}^k$ are given, however, $\bm f$ and $\bm f^\circ$ only requires the unit vector $\vv^*$ and the angle between the bearings of two of the desired landmarks. In fact, the pair $\vv^*$ and $\measuredangle(\vu_i^*,\vu_j^*)$ for a single pair $i,j$) produce a minimal representation of the home location $\vx^*$ (See Fig.~\ref{fig:elliptical_flow3}).
\end{remark}

\subsection{Double Integrator Control Synthesis}
With $\bm f(\vx)$ at hand, we now proceed to control design for a moving robot with a second-order integrator model. We assume the following linear system dynamics:
\begin{equation}
\ddot{\vx}=-\lambda_0\dot{\vx}+\bm\mu
\end{equation}
where $\vx\in\real{d}$ is the position of the robot, and $\lambda_0> 0$ is a small damping coefficient (either synthetically enforced or by natural viscous drag). The control input is given by:
\begin{equation}\label{eq:lawdi}
\bm\mu=\bm f(\vx).
\end{equation}
The control law in $\eqref{eq:lawdi}$ forces the double integrator system to follow $\bm f(\vx)$. While momentum and harsh initial conditions could result in violations of the FOV obstacles sets, we observe convergence under mild initial conditions.  

\subsection{Unicycle Control Synthesis}
For unicycles, instead of using $\bm f$, we use its normalized version $\bm f^\circ$ as navigation function and manually set the velocities.
Consider the unicycle dynamics with state variables $\vq=[\,\vx^\trp, \theta\,]^\trp\in\real{3}$ consisting of the position $\vx\in\real{2}$ and the orientation $\theta$ of the robot, with equations of motion:
\begin{equation}
\dot{\vq}=[\,\cos(\theta)\quad \sin(\theta)\quad 0\,]^\trp \upsilon + [\,0\quad0\quad 1\,]^\trp\omega\, ,
\end{equation}
where $\upsilon,\omega\in\real{}$ are linear and angular velocities of the robot with respect to its body-fixed frame. We use the following control law:
\begin{subequations}
\label{eq:unicycle}
\begin{align}
\upsilon &= k_\upsilon\big(\sqrt{1-\vv^\trp\vv^*}+|\delta_{ij} |\big)  \label{eq:unispeed}\\
\omega &= -k_\omega(\theta-\psi)+\dot\psi \label{eq:uniomega}
\end{align}
\end{subequations}
where $\psi\defeq\arctan(f^\circ_2, f^\circ_1)$ is the orientation of the vector field $\bm f^\circ(\vx)$ at robot's current location and $k_\upsilon,k_\omega$ are positive gains. Equation \eqref{eq:uniomega} yields exponential convergence of $\theta$ to $\psi$ and enables the robot to follow the integral curves of $\bm f^\circ$ to $\vx^*$.
Similar to the double integrator system, starting from harsh initial conditions, which in this case is starting very close to the FOV obstacle sets while facing them could yield violations of FOV constraints. However, under mild assumptions, convergence is achieved. 




\section{SIMULATION RESULTS}
We present simulation results for the visual homing problem for unicycles in 2D and double integrators in 2D and 3D. In Fig.~\ref{fig:simulation}, the trajectories from control laws in \eqref{eq:lawdi} and \eqref{eq:unicycle} are plotted from different starting states. Given that the initial conditions are mild, i.e. not too close to the FOV obstacle sets or at a high initial velocity towards them, convergence to the home location is achieved.

%

\section{CONCLUSIONS}
We presented a novel navigational vector field suitable for steering double integrators and unicycles for the visual homing problem. Our vector field works with a minimal representation of the home location, and is almost globally stable while avoiding the violation of field of view constraints. An interesting future direction is to use our vector field in the formation control problem in conjunction with control barrier functions.

\bibliographystyle{ieee}
\bibliography{bibroot}

\end{document}